\theoremstyle{plain}
\newtheorem{thm}{Theorem}[section]
\newtheorem{theorem}[thm]{Theorem}
\newtheorem{lemma}[thm]{Lemma}
\newtheorem{definition}[thm]{Definition}
\newtheorem{remark}[thm]{Remark}
\newtheorem{fact}[thm]{Fact}
\begin{document}
\title{
A Four-Qubits Code that is a Quantum Deletion Error-Correcting Code
 with the Optimal Length}

\author{
Manabu HAGIWARA
\thanks{
Department of Mathematics and Informatics,
Graduate School of Science,
Chiba University
1-33 Yayoi-cho, Inage-ku, Chiba City,
Chiba Pref., JAPAN, 263-0022
}
 \and
Ayumu NAKAYAMA
\thanks{
Department of Mathematics and Informatics,
Graduate School of Science and Engineering,
Chiba University
1-33 Yayoi-cho, Inage-ku, Chiba City,
Chiba Pref., JAPAN, 263-0022
}
}

\date{}
\maketitle

\begin{abstract}
This paper provides a new instance of quantum deletion error-correcting codes.
This code can correct any single quantum deletion error,
while our code is only of length 4.
This paper also provides an example of an encoding quantum circuit and decoding quantum circuits.
It is also proven that the length of any single deletion error-correcting codes is greater than
or equal to 4.
In other words, our code is optimal for the code length.
\end{abstract}

\section{Introduction}
Similar to classical error-correcting codes,
quantum error-correcting codes are an essential factor for implementing
practical quantum communication and quantum computation.

In recent classical coding theory,
deletion codes have attracted researcher's attention.
For example,
while only three papers on deletion codes were presented at the symposium ISIT 2015,
more than ten papers were presented at ISIT 2017, 2018 and 2019.
Furthermore, two technical sessions for deletion codes were organized at the last ISIT.
Classical deletion error-correcting codes have applications to
reliable communication for synchronization error \cite{sala2017exact,helberg1993coding},
error-correction for DNA storage \cite{buschmann2013levenshtein},
error-correction for racetrack memory \cite{chee2017coding},
etc.

On the other hand,
only a few studies on quantum deletion correcting codes 
have been published.
In 2019, Leahy et.al. introduced quantum insertion/deletion channel
\cite{leahy2019quantum}
and a technique to reduce quantum deletion error to
quantum erasure error under the specific assumption.
The first quantum binary deletion codes under the general scenario
was constructed in \cite{nakayama2019first}.
The code length was $8$.

Deletion error-correction is a problem that
to determine the original information from a partial information.
The difference from erasure error-correction is that
we are not given the information on the error positions.
For example, a bit-sequence $00111000$ is changed to
$001?1000$ by a single erasure error
but
is changed to $0011000$ by a single deletion error.
The symbol ``$?$'' tells the position where error occurred.
An erasure bit-sequence $001?1000$ can be transformed to $0011000$ by deleting the symbol ``$?$''.
Hence deletion error-correction is more difficult than erasure error-correction.

In quantum information theory,
quantum deletion error-correction is a problem
to determine the quantum state in the entire quantum system from 
a quantum state in a partial system.
Therefore it is related to various topics, e.g.,
quantum erasure error-correcting codes \cite{grassl1997codes},
partial trace, quantum secret sharing \cite{cleve1999share,hillery1999quantum},
purification of quantum state \cite{hughston1993complete},
quantum cloud computing \cite{biamonte2017quantum}
and etc.

This paper provides a new and a shorter length single deletion error-correcting
code.
The code space is not an instance of previously known quantum error-correcting
codes, e.g.,
CSS codes \cite{calderbank1996good,steane1996multiple},
stabilizer codes \cite{gottesman1997stabilizer},
surface codes \cite{fowler2012surface},
and etc.
Remark that our code length is only $4$.
This is the same length to the optimal shortest code length of
quantum erasure error-correcting codes \cite{grassl1997codes}.
In fact, we show that
$4$ is also the optimal length of quantum deletion error-correcting codes.

This paper also provides examples of one encoding circuit and two different decoding circuits.
The depth of the circuits are small.

The readers are assumed to be familiar with quantum information theory
and coding theory, in particular,
quantum error-correcting codes and classical deletion error-correcting codes.

\section{Deletion Errors and Code Construction}
\subsection{Deletion Error and Deletion Error-Correcting Codes}

Set $\ket{0}, \ket{1} \in \mathbb{C}^2$ as 
$$\ket{0}:= \left( \begin{array}{c} 1 \\ 0 \end{array} \right),
\ket{1}:= \left( \begin{array}{c} 0 \\ 1 \end{array} \right) $$
respectively.
For a binary sequence $ \bm{x} = x_1 x_2 \dots x_n \in \{0,1\}^n$,
$ \ket{ \bm{x} } $ denotes
$$
\ket{ x_1 } \otimes \ket{ x_2 } \otimes \dots \ket{ x_n }
\in \mathbb{C}^{2 \otimes n} .
$$
We denote the set of all density matrices of order $N$
by $S(\mathbb{C}^N)$.
An element of $S(\mathbb{C}^N)$ is called a quantum state in this paper.
We also use a complex vector for representing a quantum state if 
the state is pure.

For an integer $1 \le i \le n$ and a square matrix
$$A = \sum_{\bm{x},\bm{y} \in \{0,1\}^n }
  a_{\bm{x},\bm{y}} \cdot \ket{x_1} \bra{y_1} \otimes \cdots \otimes \ket{x_n} \bra{y_n}$$
with $a_{\bm{x},\bm{y}} \in \mathbb{C}$,
 define the map $\mathrm{Tr}_i : S( \mathbb{C}^{2 \otimes n} ) \rightarrow S( \mathbb{C}^{2 \otimes (n-1)} )$ as
\begin{align*}
\mathrm{Tr}_i(A) :=
 &\sum_{\bm{x},\bm{y} \in \{0,1\}^n}
   a_{\bm{x},\bm{y}} \cdot \mathrm{Tr}(\ket{x_i} \bra{y_i}) \cdot
  \ket{x_1}\bra{y_1} \otimes \\
 & \cdots \otimes \ket{x_{i-1}} \bra{y_{i-1}}
  \otimes \ket{x_{i+1}} \bra{y_{i+1}} \otimes \\
 & \cdots \otimes \ket{x_n} \bra{y_n}.
\end{align*}
The map $\mathrm{Tr}_i$ is called a partial trace.

Recall that in the classical coding theory, a single deletion error is defined
as an operator that maps a sequence $x_1 x_2 \dots x_i \dots x_n$
to a short sequence $x_1 x_2 \dots x_{i-1} x_{i+1} \dots x_n$ for some $i$.

\begin{definition}[Deletion Error $D_i$]
For an integer $1 \le i \le n$,
we call $\mathrm{Tr}_i$ a single deletion error $D_i$,
i.e.,
$$D_i(\rho):= \mathrm{Tr}_i(\rho),$$
where $\rho \in S(\mathbb{C}^{2 \otimes n})$ is a quantum state.
\end{definition}
If a quantum state $\rho$ is corresponding to $n$-photons $p_1, p_2, \dots, p_n$,
the state $D_i( \rho )$ is corresponding to $n-1$-photons $p_1, p_2, \dots, p_{i-1}, p_{i+1}, \dots, p_n$.

\begin{definition}[Deletion Error-Correcting Code]
We call a vector space $Q \subset \mathbb{C}^{2 \otimes n}$
an $[n , k]$ single deletion error-correcting code if
\begin{itemize}
\item there exists a complex linear bijection
$\mathrm{Enc} : \mathbb{C}^{2 \otimes k} \rightarrow Q$,
\item there exists a map
$\mathrm{Dec} : S( \mathbb{C}^{2 \otimes (n-1)} ) \rightarrow \mathbb{C}^{2 \otimes k}$
such that
for any $ \ket{\phi} \in \mathbb{C}^{2 \otimes k}$ and for any $1 \le i \le n$,
$$
\mathrm{Dec} \circ D_i \circ \mathrm{Enc} ( \ket{ \phi })
=
\ket{ \phi },
$$
\end{itemize}
where $\circ$ is the composite for operations.
In other words,
there exist an encoder $\mathrm{Enc}$ and a decoder $\mathrm{Dec}$
that correct any single deletion errors.
\end{definition}

Comparing to erasure errors,
deletion errors do not tell the position where the information is deleted.
Hence to correct deletion errors is more difficult than to correct erasure errors.

\subsection{Code Construction}
Most famous classical error-correcting codes for single deletion errors are
``non-linear'' codes that are called VT codes \cite{tenengolts1965correction}
discovered by Levenshtein \cite{levenshtein1966binary}.
One of the reasons why ``non-linear'' codes are preferable for deletion error-correction
is unveiled in \cite{abdel2010correcting}. 
It is stated that a code rate of any single deletion error-correcting code
cannot exceed $1/2$ if a code is linear.
This implies that if a CSS code is constructed from two classical linear deletion error-correcting codes
then the code rate of the CSS code is $0$.

\begin{definition}[$\mathrm{En}_4$ and $Q_4$]
Let us define a linear map $\mathrm{En}_4 : \mathbb{C}^{2} \rightarrow \mathbb{C}^{2 \otimes 4}$.
For a quantum state $| \phi \rangle = \alpha | 0 \rangle + \beta | 1 \rangle \in \mathbb{C}^2$,
$\mathrm{En}_4$ maps the state $\ket{ \phi }$ to the following state $\ket{ \Phi }$,
\begin{align*}
| \Phi \rangle
:=&  \frac{\alpha}{\sqrt{2}} (| 0000 \rangle + | 1111 \rangle )\\
& + \frac{\beta}{\sqrt{6}}   (| 0011 \rangle + |0101 \rangle + |0110 \rangle \\
& + |1001 \rangle + |1010 \rangle + |1100 \rangle).
\end{align*}
Set $Q_4$ as the image of $\mathrm{En}_4$ for quantum messages, i.e.,
$$
Q_4 := \{ \mathrm{En}_4 ( \ket{ \phi } ) 
 \mid \ket{\phi} \in \mathbb{C}^2,
 \ket{\phi} \bra{\phi} \in S( \mathbb{C}^2 ) \}.
$$
\end{definition}
\begin{remark}
The QEC code with $4$ qubits \cite{grassl1997codes} is closely related to our code $Q_4$.
We can say that more symmetry is introduced to our code
for correcting error at the unknown position.
\end{remark}

In fact, we will show that 
$Q_4$ is a $[4,1]$ single deletion error-correcting code with the encoder $\mathrm{En}_4$.

We can characterize this encoding in the following manner.
Let $A$ be the set of four bit sequences with Hamming weight $0$ or $4$
and $B$ the set of four bit sequences with Hamming weight $2$,
i.e.,
\begin{align*}
A &= \{0000, 1111\},\\
B &= \{0011, 0101, 0110, 1001, 1010, 1100\}.
\end{align*}
Then the codeword $| \Phi \rangle$ is
$$
\frac{\alpha}{\sqrt{2}}  \sum_{a \in A} | a \rangle
+ \frac{\beta}{\sqrt{6}} \sum_{b \in B} | b \rangle.
$$

By this encoding, $\ket{0}$ and $\ket{1}$ are encoded to quantum states
 which are superpositions of two and six orthonormal states.
Hence this encoding is neither an instance of CSS codes nor stabilizer codes.

\begin{definition}[$\mathrm{De}_4$]
Let us define a map $\mathrm{De}_4$ from $S( \mathbb{C}^{2 \otimes 3} )$ to
$\mathbb{C}^{2}$.
The map $\mathrm{De}_4$ consists of the following steps:
\begin{itemize}
\item[] (Step 1) Perform the measurement $\{ P_0, P_1 \}$ to a quantum state in $S( \mathbb{C}^{2 \otimes 3} )$
and obtain the outcome $i \in \{0, 1\}$,
where $P_0$ (resp. $P_1$) is the projection from $\mathbb{C}^{2 \otimes 3}$ to
the linear space $V_0$ (resp. $V_1$) spanned by $\ket{000}, \ket{011}, \ket{101}$ and $\ket{110}$
(resp. $\ket{111}, \ket{100}, \ket{010}$ and $\ket{001}$).
Hence, the outcome is $i$ if the quantum state in $S( \mathbb{C}^{2 \otimes 3})$
 is changed into a state in $S( V_{i} )$.
\item[] (Step 2) Only if the outcome is $1$, act the quantum operation $F : V_1 \rightarrow V_0$ to $S$,
where
$F$ is defined as
\begin{align*}
F( a \ket{111} + b \ket{100} + c \ket{010} + d \ket{001} ) \\
:= a \ket{000} + b \ket{011} + c \ket{101} + d \ket{110}.
\end{align*}
\item[] (Step 3) Act the quantum operation $G : V_0 \rightarrow \mathbb{C}^{2 \otimes 3}$ to the state,
where
$G$ is defined as
\begin{align*}
G( a \ket{000} + b \ket{\overline{100}} + c \ket{\overline{010}} + d \ket{\overline{001}}\\
:= a \ket{000} + b \ket{100} + c \ket{010} + d \ket{011},
\end{align*}
and where
\begin{align*}
\ket{\overline{100}} &= \frac{1}{\sqrt{3}} \left( \ket{011} + \ket{101} + \ket{110} \right)\\
\ket{\overline{010}} &= \frac{1}{\sqrt{3}} \left( \ket{011} + \omega \ket{101}   + \omega^2 \ket{110} \right)\\
\ket{\overline{001}} &= \frac{1}{\sqrt{3}} \left( \ket{011} + \omega^2 \ket{101} + \omega   \ket{110} \right).
\end{align*}
\item[] (Step 4) Finally,
delete the 3rd and the 2nd qubits
by the partial traces $\mathrm{Tr}_3$ and $\mathrm{Tr}_2$.
\end{itemize}
\end{definition}

\subsection{Proof for Error-Correcting Property}
In this subsection, we prove our code $Q_4$
corrects any single deletion errors $D_1, D_2, D_3,$ and $D_4$.

Let us set 
\begin{align*}
A_0 &:= \{000\},\\
A_1 &:= \{111\},\\
B_0 &:= \{011, 101, 110\},\\
B_1 &:= \{001, 010, 100\}.
\end{align*}

\begin{lemma}
Let 
$\ket{\Phi} := \alpha \sum_{ \bm{a} \in A } | \bm{a} \rangle
+ \beta \sum_{ \bm{b} \in B} | \bm{b} \rangle$
and
$\rho := | \Phi \rangle \langle \Phi |$.
For any $1 \le i \le 4$,
$$D_i (\rho)
= \frac{1}{2}| \Phi_0 \rangle \langle \Phi_0 | +  \frac{1}{2} | \Phi_1 \rangle \langle \Phi_1 |,
$$
where
$| \Phi_0 \rangle = 
\alpha \sum_{ \bm{a} \in A_0 } | \bm{a} \rangle
+ \frac{\beta}{\sqrt{3}} \sum_{\bm{b} \in B_0} | \bm{b} \rangle$
and
$| \Phi_1 \rangle = 
\alpha \sum_{ \bm{a} \in A_1 } | \bm{a} \rangle
+ \frac{\beta}{\sqrt{3}} \sum_{\bm{b} \in B_1} | \bm{b} \rangle.$
\end{lemma}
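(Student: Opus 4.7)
The plan is to exploit the permutation symmetry of $\ket{\Phi}$ to reduce the four cases $i \in \{1,2,3,4\}$ to a single case, and then carry out the partial trace by direct calculation.

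First I would observe that both $A$ (sequences of Hamming weight $0$ or $4$) and $B$ (sequences of Hamming weight $2$) are invariant under the natural action of $S_4$ permuting the four qubit positions, so $\ket{\Phi}$ (and hence $\rho$) is fixed by the associated unitary action. From this, a routine index manipulation shows that $D_i(\rho)$ does not depend on $i$: for the transposition $\pi = (i\,j)$ one has $U_\pi \rho U_\pi^\dagger = \rho$, and tracing out position $j$ in the permuted state is the same as tracing out position $i$ in the original. Therefore it suffices to verify the identity for $i = 1$.

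Next I would split $\ket{\Phi}$ along the first qubit as $\ket{\Phi} = \ket{0} \otimes \ket{\psi_0} + \ket{1} \otimes \ket{\psi_1}$, where $\ket{\psi_k}$ collects the amplitudes for the basis states whose first bit is $k$. Reading off the elements of $A$ and $B$, the length-$3$ suffixes appearing in $\ket{\psi_0}$ are exactly $A_0 \cup B_0$, and those in $\ket{\psi_1}$ are exactly $A_1 \cup B_1$. By the standard partial trace formula the cross terms involving $\ket{0}\bra{1}$ and $\ket{1}\bra{0}$ vanish, so
\[
D_1(\rho) = \ket{\psi_0}\bra{\psi_0} + \ket{\psi_1}\bra{\psi_1}.
\]
Matching the amplitudes against the definitions of $\ket{\Phi_0}$ and $\ket{\Phi_1}$ then gives $\ket{\psi_k} = \tfrac{1}{\sqrt{2}}\ket{\Phi_k}$, which produces the prefactor $\tfrac{1}{2}$ on each term.

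Once the symmetry reduction is in place, the remaining argument is essentially bookkeeping, so there is no serious conceptual obstacle. The only subtle point is keeping track of normalizations: the factors $1/\sqrt{2}$ and $1/\sqrt{6}$ coming from $\mathrm{En}_4$ combine with the partitions $A = A_0 \sqcup A_1$ and $B = B_0 \sqcup B_1$ in just the right way to yield the common prefactor $\tfrac{1}{2}$ that is independent of $i$ and of $\alpha,\beta$.
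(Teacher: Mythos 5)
Your proposal is correct and follows essentially the same route as the paper's own proof: split $\ket{\Phi}$ along the first qubit, note that the $\ket{0}\bra{1}$ and $\ket{1}\bra{0}$ cross terms vanish under $\mathrm{Tr}_1$, identify $\ket{\psi_k}=\tfrac{1}{\sqrt{2}}\ket{\Phi_k}$, and dispose of $i=2,3,4$ by the permutation symmetry of $A$ and $B$ (the paper merely states this symmetry, whereas you justify it via the $S_4$-invariance, but the argument is the same). The only point worth flagging is that the lemma as printed omits the $1/\sqrt{2}$ and $1/\sqrt{6}$ normalizations in $\ket{\Phi}$; like the paper's proof, you correctly work with the normalized codeword from $\mathrm{En}_4$.
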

\begin{proof}
At the first, we show a case where $i=1$.

By using $A_0, A_1, B_0$ and $B_1$,
we can rewrite $\ket{ \Phi }$ as
\begin{align*}
\ket{ \Phi }
=
&\ket{ 0 }
 \left(
   \frac{ \alpha }{\sqrt{2}} \sum_{\bm{a} \in A_0 } \ket{ \bm{a} }
    +
   \frac{ \beta }{\sqrt{6}} \sum_{\bm{b} \in B_0 } \ket{ \bm{b} }
 \right)\\
&+\ket{ 1 }
 \left(
   \frac{ \alpha }{\sqrt{2}} \sum_{\bm{a} \in A_1 } \ket{ \bm{a} }
    +
   \frac{ \beta }{\sqrt{6}} \sum_{\bm{b} \in B_1 } \ket{ \bm{b} }
 \right).
\end{align*}
Hence,
\begin{align*}
\rho
 &= \ket{ \Phi} \bra{ \Phi }\\
 &= \ket{ 0} \bra{ 0 } \otimes \\ 
 &  \biggl(
     \frac{ \alpha }{\sqrt{2}} \frac{ \overline{\alpha} }{\sqrt{2}}
        \sum_{\bm{a }, \bm{ a'} \in A_0 } \ket{ \bm{a} } \bra{ \bm{a '} }
  +
     \frac{ \alpha }{\sqrt{2}} \frac{ \overline{\beta} }{\sqrt{6}}
        \sum_{\bm{a } \in A_0, \bm{b' } \in B_0 } \ket{ \bm{a} } \bra{ \bm{b'} }\\
 &+
     \frac{ \beta }{\sqrt{6}} \frac{ \overline{\alpha} }{\sqrt{2}}
        \sum_{\bm{b } \in B_0, \bm{ a'} \in A_0 } \ket{ \bm{b} } \bra{ \bm{a '} }
  +
     \frac{ \beta }{\sqrt{6}} \frac{ \overline{\beta} }{\sqrt{6}}
        \sum_{\bm{b }, \bm{b' } \in B_0 } \ket{ \bm{b} } \bra{ \bm{b'} }
   \biggr)\\
 &+ \ket{ 1} \bra{ 1 } \otimes \\ 
 &  \biggl(
     \frac{ \alpha }{\sqrt{2}} \frac{ \overline{\alpha} }{\sqrt{2}}
        \sum_{\bm{a }, \bm{ a'} \in A_1 } \ket{ \bm{a} } \bra{ \bm{a '} }
  +
     \frac{ \alpha }{\sqrt{2}} \frac{ \overline{\beta} }{\sqrt{6}}
        \sum_{\bm{a } \in A_1, \bm{b' } \in B_1 } \ket{ \bm{a} } \bra{ \bm{b'} }\\
 &+
     \frac{ \beta }{\sqrt{6}} \frac{ \overline{\alpha} }{\sqrt{2}}
        \sum_{\bm{b } \in B_1, \bm{ a'} \in A_1 } \ket{ \bm{b} } \bra{ \bm{a '} }
  +
     \frac{ \beta }{\sqrt{6}} \frac{ \overline{\beta} }{\sqrt{6}}
        \sum_{\bm{b }, \bm{b' } \in B_1 } \ket{ \bm{b} } \bra{ \bm{b'} }
   \biggr)\\
 &+ \ket{ 0} \bra{ 1 } \otimes \rho' + \ket{1} \bra{0} \otimes \rho'',
\end{align*}
for some matrices $\rho'$ and $\rho''$.

Note that $\mathrm{Tr}( \ket{0} \bra{0} ) = \mathrm{Tr}( \ket{1} \bra{1} ) = 1$
and $\mathrm{Tr}( \ket{0} \bra{1} ) = \mathrm{Tr}( \ket{1} \bra{0} ) = 0$.
By the definition of the partial trace,
\begin{align*}
&\mathrm{Tr_1} ( \rho )\\
&=
\biggl(
     \frac{ \alpha }{\sqrt{2}} \frac{ \overline{\alpha} }{\sqrt{2}}
        \sum_{\bm{a }, \bm{ a'} \in A_0 } \ket{ \bm{a} } \bra{ \bm{a '} }
  +
     \frac{ \alpha }{\sqrt{2}} \frac{ \overline{\beta} }{\sqrt{6}}
        \sum_{\bm{a } \in A_0, \bm{b' } \in B_0 } \ket{ \bm{a} } \bra{ \bm{b'} }\\
 &+
     \frac{ \beta }{\sqrt{6}} \frac{ \overline{\alpha} }{\sqrt{2}}
        \sum_{\bm{b } \in B_0, \bm{ a'} \in A_0 } \ket{ \bm{b} } \bra{ \bm{a '} }
  +
     \frac{ \beta }{\sqrt{6}} \frac{ \overline{\beta} }{\sqrt{6}}
        \sum_{\bm{b }, \bm{b' } \in B_0 } \ket{ \bm{b} } \bra{ \bm{b'} }
   \biggr)\\
&+ \biggl(
     \frac{ \alpha }{\sqrt{2}} \frac{ \overline{\alpha} }{\sqrt{2}}
        \sum_{\bm{a }, \bm{ a'} \in A_1 } \ket{ \bm{a} } \bra{ \bm{a '} }
  +
     \frac{ \alpha }{\sqrt{2}} \frac{ \overline{\beta} }{\sqrt{6}}
        \sum_{\bm{a } \in A_1, \bm{b' } \in B_1 } \ket{ \bm{a} } \bra{ \bm{b'} }\\
&+
     \frac{ \beta }{\sqrt{6}} \frac{ \overline{\alpha} }{\sqrt{2}}
        \sum_{\bm{b } \in B_1, \bm{ a'} \in A_1 } \ket{ \bm{b} } \bra{ \bm{a '} }
  +
     \frac{ \beta }{\sqrt{6}} \frac{ \overline{\beta} }{\sqrt{6}}
        \sum_{\bm{b }, \bm{b' } \in B_1 } \ket{ \bm{b} } \bra{ \bm{b'} }
   \biggr)\\
&=
\frac{1}{2}\biggl(
     \alpha \overline{\alpha} 
        \sum_{\bm{a }, \bm{ a'} \in A_0 } \ket{ \bm{a} } \bra{ \bm{a '} }
  +
     \alpha \frac{ \overline{\beta} }{\sqrt{3}}
        \sum_{\bm{a } \in A_0, \bm{b' } \in B_0 } \ket{ \bm{a} } \bra{ \bm{b'} }\\
 &+
     \frac{ \beta }{\sqrt{3}} \overline{\alpha}
        \sum_{\bm{b } \in B_0, \bm{ a'} \in A_0 } \ket{ \bm{b} } \bra{ \bm{a '} }
  +
     \frac{ \beta }{\sqrt{3}} \frac{ \overline{\beta} }{\sqrt{3}}
        \sum_{\bm{b }, \bm{b' } \in B_0 } \ket{ \bm{b} } \bra{ \bm{b'} }
   \biggr)\\
&+ \frac{1}{2} \biggl(
     \alpha \overline{\alpha} 
        \sum_{\bm{a }, \bm{ a'} \in A_1 } \ket{ \bm{a} } \bra{ \bm{a '} }
  +
     \alpha \frac{ \overline{\beta} }{\sqrt{3}}
        \sum_{\bm{a } \in A_1, \bm{b' } \in B_1 } \ket{ \bm{a} } \bra{ \bm{b'} }\\
&+
     \frac{ \beta }{\sqrt{3}} \overline{\alpha}
        \sum_{\bm{b } \in B_1, \bm{ a'} \in A_1 } \ket{ \bm{b} } \bra{ \bm{a '} }
  +
     \frac{ \beta }{\sqrt{3}} \frac{ \overline{\beta} }{\sqrt{3}}
        \sum_{\bm{b }, \bm{b' } \in B_1 } \ket{ \bm{b} } \bra{ \bm{b'} }
   \biggr)\\
&= \frac{1}{2}| \Phi_0 \rangle \langle \Phi_0 | +  \frac{1}{2} | \Phi_1 \rangle \langle \Phi_1 |.
\end{align*}

By the symmetry of $\ket{\Phi}$,
$$
D_1 (\rho) = D_2( \rho) = D_3 (\rho) = D_4 (\rho)
$$
holds.
\end{proof}

\begin{theorem}
The code $Q_4$ is a $[4,1]$ single deletion error-correcting code
with the encoder $\mathrm{En}_4$ and the decoder $\mathrm{De}_4$.
\end{theorem}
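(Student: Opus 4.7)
The plan is to apply the decoder $\mathrm{De}_4$ step by step to the state $D_i(\rho)$ and verify that the final output is $\ket{\phi}$. The previous lemma already does most of the work: for every $i \in \{1,2,3,4\}$,
\[
D_i(\rho) = \frac{1}{2}\ket{\Phi_0}\bra{\Phi_0} + \frac{1}{2}\ket{\Phi_1}\bra{\Phi_1},
\]
an expression that does not depend on $i$. Thus it is enough to show that $\mathrm{De}_4$ sends this particular mixed state to $\ket{\phi}$, and the problem reduces to a finite calculation tracking the four steps.

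First, for Step 1 I would observe that every basis ket appearing in $A_0 \cup B_0$ has even Hamming weight, while every basis ket in $A_1 \cup B_1$ has odd Hamming weight. Comparing with the definitions of $V_0$ and $V_1$, this yields $\ket{\Phi_0} \in V_0$ and $\ket{\Phi_1} \in V_1$. Consequently the projective measurement $\{P_0,P_1\}$ separates the two terms of the mixture exactly: each outcome $j$ occurs with probability $1/2$ and leaves the post-measurement state $\ket{\Phi_j}\bra{\Phi_j}$.

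For Step 2 I would verify $F(\ket{\Phi_1}) = \ket{\Phi_0}$ by a direct term-by-term match: $F$ acts on the distinguished basis $\{\ket{111},\ket{100},\ket{010},\ket{001}\}$ of $V_1$ by sending each vector to its bitwise complement in $V_0$, which carries the coefficient tuple $(\alpha, \beta/\sqrt 3, \beta/\sqrt 3, \beta/\sqrt 3)$ of $\ket{\Phi_1}$ to exactly the coefficients of $\ket{\Phi_0}$. After Step 2 both branches therefore coincide in the pure state $\ket{\Phi_0}\bra{\Phi_0}$.

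Steps 3 and 4 should then strip away the redundancy. The crucial identity is that, by the definition of $\ket{\overline{100}}$,
\[
\ket{\Phi_0} = \alpha \ket{000} + \beta \ket{\overline{100}},
\]
so the coefficients of $\ket{\overline{010}}$ and $\ket{\overline{001}}$ in the expansion used by $G$ vanish. Hence $G(\ket{\Phi_0}) = \alpha\ket{000} + \beta\ket{100} = \ket{\phi}\otimes\ket{00}$, and the partial traces $\mathrm{Tr}_3$ and $\mathrm{Tr}_2$ of Step 4 discard the ancillary $\ket{00}$, leaving $\ket{\phi}$. No single step is technically deep; the main obstacle is bookkeeping, and in particular recognizing that the $B_0$-superposition in $\ket{\Phi_0}$ matches the distinguished basis vector $\ket{\overline{100}}$ so that the more exotic action of $G$ on $\ket{\overline{010}}$ and $\ket{\overline{001}}$ never enters the calculation.
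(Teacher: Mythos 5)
Your proposal is correct and follows essentially the same route as the paper's own proof: invoke the lemma to reduce to the $i$-independent mixture $\tfrac12\ket{\Phi_0}\bra{\Phi_0}+\tfrac12\ket{\Phi_1}\bra{\Phi_1}$, then track Steps 1--4 of $\mathrm{De}_4$, with both branches collapsing to $\ket{\Phi_0}$ after $F$ and $G(\ket{\Phi_0})=\ket{\phi}\otimes\ket{00}$ finishing the job. You actually supply slightly more justification than the paper does (the even/odd weight observation for Step 1 and the identification of the $B_0$-superposition with $\ket{\overline{100}}$ in Step 3), which the paper leaves implicit.
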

\begin{proof}
Since $\ket{\Phi_0} \in V_0$ and $\ket{\Phi_1} \in V_1$,
by the Step 1,
The obtained quantum state $s$ is $\ket{\Phi_0} \bra{\Phi_0}$
or $\ket{\Phi_1} \bra{\Phi_1}$.

At the Step 2,
if the outcome is $1$,
the state is changed to $\ket{\Phi_0} \bra{\Phi_0}$ by the operation $F$.
Hence after the Step 2, obtained quantum state is $\ket{\Phi_0} \bra{\Phi_0}$.

Since $G( \Phi_0 ) = \alpha \ket{000} + \beta \ket{100}
 = \ket{\phi} \otimes \ket{00}$,
by the Step 3,
the quantum state is changed to
$
\ket{ \phi } \bra{ \phi} \otimes \ket{00} \bra{00}.
$

Hence at the Step 4,
the obtained state is $\ket{\phi} \bra{ \phi }$,
i.e. the original quantum state.
\end{proof}

\section{Encoding Circuit and Decoding Circuit}
\subsection{Encoding Circuit}
\begin{figure}[htbp]
\begin{center}
\includegraphics[width=5.5cm,bb=0 0 145 136]{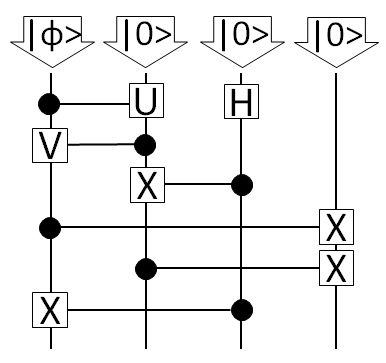} 
\caption{Encoder }
\label{figure:encoder}
\end{center}
\end{figure}

Figure \ref{figure:encoder} shows an example of an encoder for our four qubits code $Q_4$.
The gate $H$ is the Hadamard matrix,
i.e.,
$$
H = 
\left(
\begin{array}{cc}
1/\sqrt{2} & 1/\sqrt{2} \\
1/\sqrt{2} & -1/\sqrt{2}
\end{array}
\right),
$$
and the gate $X$ is the bit-flip matrix,
i.e.,
$$
X := 
\left(
\begin{array}{cc}
0 & 1 \\
1 & 0
\end{array}
\right).
$$
The black dot is the control gate for the connected gate.
For example, the final gate of Figure \ref{figure:encoder}
is the C-NOT gate with the 3rd qubit as the control qubit
and the 1st qubit as the target bit.

The gates $U$ and $V$ are defined as
$$
U := 
\left(
\begin{array}{cc}
1/\sqrt{3} & -\sqrt{2}/\sqrt{3} \\
\sqrt{2}/\sqrt{3} & 1/\sqrt{3}
\end{array}
\right),
V := 
\left(
\begin{array}{cc}
1/\sqrt{2} & 1/\sqrt{2} \\
-1/\sqrt{2} & 1/\sqrt{2}
\end{array}
\right)
$$
respectively.

The input of the circuit is a single qubit $\ket{ \phi } \in \mathbb{C}^2$.
The input position is settled to the left-most position.
For the other three positions,
initialized three qubits $| 000 \rangle$ are input.

By the first two depth operations,
i.e., the controlled $U$ gate, the controlled $V$ gate and the Hadamard gate,
a quantum state $\ket{ 0000 }$ is changed to
$$
\frac{1}{ \sqrt{2}} \left( \ket{ 0000 } + \ket{ 0010 } \right).
$$
and a quantum state $\ket{ 1000 }$ is changed to
$$
\frac{1}{ \sqrt{6}} \left( \ket{ 1000 } + \ket{ 1010 }
+
 \ket{ 0100 } +  \ket{ 0110 }
+
 \ket{ 1100 } +  \ket{ 1110 } \right).
$$

By the remaining four controlled  $X$ gates, i.e. C-NOTs,
a quantum state $\ket{ x_1 x_2 x_3 0 }$ is changed to
$\ket{(x_1 + x_3) (x_2 + x_3) (x_3) (x_1 + x_2 + x_3)}$,
e.g.,
\begin{align*}
\ket{0000} &\rightarrow \ket{0000},\\
\ket{0010} &\rightarrow \ket{1111},\\
\ket{1000} &\rightarrow \ket{1001},\\
\ket{1010} &\rightarrow \ket{0110},\\
\ket{0100} &\rightarrow \ket{0101},\\
\ket{0110} &\rightarrow \ket{1010},\\
\ket{1100} &\rightarrow \ket{1100},\\
\ket{1110} &\rightarrow \ket{0011}.
\end{align*}

Hence $(\alpha \ket{ 0 } + \beta \ket{ 1} ) \otimes \ket{ 000 }$ is encoded to
$$
\alpha \left( \frac{1}{\sqrt{2}} \sum_{ \bm{a} \in A} \ket{ \bm{a} } \right)
+
\beta  \left( \frac{1}{\sqrt{6}} \sum_{ \bm{b} \in B} \ket{ \bm{b} } \right).
$$
\subsection{Decoding Circuits}

\begin{figure}[htbp]
\begin{center}
\includegraphics[width=4cm,bb=0 0 108 105]{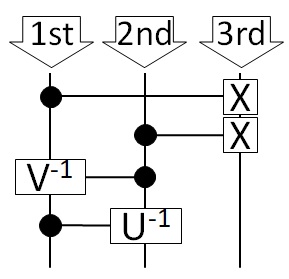} 
\caption{Decoding Circuit after Step 1}
\label{figure:decoder_3}
\end{center}
\end{figure}

Figure \ref{figure:decoder_3} shows a quantum circuit that
changes a pure state $\left( \ket{011} + \ket{101} + \ket{110} \right)/\sqrt{3}$ to a pure state $\ket{100}$
and keeps a pure state $\ket{000}$.
Hence the circuit changes a pure state $\ket{ \Psi_{0} }
= \alpha \sum_{ \bm{a} \in A_0 } \ket{ \bm{a} } + \frac{\beta}{3} \sum_{ \bm{b} \in B_0 } \ket{ \bm{b} }$
to a pure state $\ket{ \phi } = \alpha \ket{0} + \beta \ket{1}$.

Remember that the step 1 of the decoding algorithm
changes a single deleted quantum state $D_i ( \rho )$ to $\ket{\Psi_0}$.
Hence the circuit is available as a decoding circuit after step 1.

Let us provide another quantum circuit that is depicted as Figure \ref{figure:decoder_4}.
The quantum circuit consists of two parts.
The first part has six C-NOT and the last part is the same as the circuit defined by Figure \ref{figure:decoder_3}.
The first part changes a quantum pure state $\ket{x_1 x_2 x_3}$
 to $\ket{x_1 x_2 x_3 0}$ if the Hamming weight of $x_1 x_2 x_3$ is even and
 to $\ket{(x_1 + 1) (x_2 +1)  (x_3 + 1) 1}$ if the weight is odd.
This implies that 
the first part changes
a single deleted quantum state $D_i (\rho)$ to 
$$ \ket{\Psi_0} \bra{\Psi_0} \otimes \left( \begin{array}{cc} 1/2 & 0 \\ 0 & 1/2 \end{array} \right).$$

Since the last part is the same as the circuit corresponding to Figure \ref{figure:decoder_3},
the state is changed to 
$$\ket{ \phi} \bra{\phi} \otimes \ket{0} \bra{0} \otimes \ket{0} \bra{0} 
\otimes \left( \begin{array}{cc} 1/2 & 0 \\ 0 & 1/2 \end{array} \right).$$
In other words, the state at the first position of output is a pure state $\ket{ \phi }$.

\begin{figure}[htbp]
\begin{center}
\includegraphics[width=5.5cm,bb=0 0 144 214]{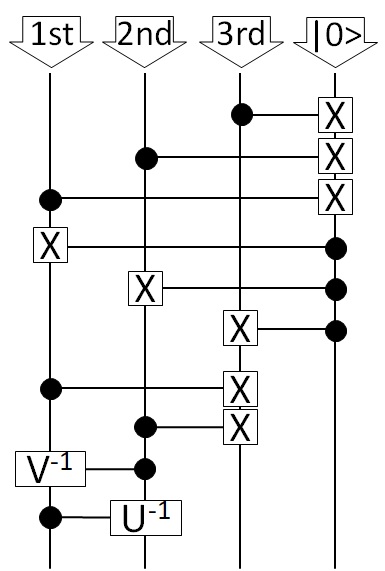} 
\caption{Decoding Circuit without Measurement}
\label{figure:decoder_4}
\end{center}
\end{figure}

\section{Generalization}
\subsection{Number of Information Qubits}
We generalize our $[4, 1]$ single deletion error-correcting code $Q_4$ to
a $[2^{k+2} - 4, k]$
single quantum deletion error-correcting code
for any positive integer $k$.

Let $l$ be a positive integer.
For $0 \le i \le l-1$,
let us define
$$
\mathcal{A}_i := \{ \bm{x} \in \{ 0, 1 \}^{4(l-1)} \mid
 \mathrm{wt}( \bm{x} ) = 2i \text{ or } 4(l-1) - 2i\}
$$
and
$$
\mathrm{En}_{4 (l-1)} \left( \ket{ \phi_l } \right)
:=
\sum_{0 \le i \le l-1} c_i \left( \sum_{\bm{x} \in \mathcal{A}_i } \ket{ \bm{x} } \right),
$$
where $\ket{ \phi_l } := \sum_{0 \le i \le l-1} c_i \ket{i}$ is a quantum pure state of level $l$,
i.e. $\ket{ \phi_l } \in \mathbb{C}^{l}$,
and $\ket{0}, \ket{1}, \dots, \ket{l-1}$ is the standard orthonormal basis of $\mathbb{C}^l$.

We claim that the image of $\mathrm{En}_{4(l-1)}$ is a single quantum deletion error-correcting code
for any $l \ge 2$.
A decoding algorithm $\mathrm{De}_{4(l-1)}$ can be defined similar to $\mathrm{De}_{4}$.
Due to the limit of page numbers, we only explain how to define its measurement $\{ \mathcal{P}_0, \mathcal{P}_1 \}$.

$\mathcal{P}_0$ (resp. $\mathcal{P}_1$) is the projection from $\mathbb{C}^{2 \otimes 4(l-1)}$ to
the linear space $\mathcal{V}_0$ (resp. $\mathcal{V}_1$) spanned by 
\begin{align*}
 \{ \ket{ \bm{x}} \mid & \; \bm{x} \in \{0, 1\}^{4(l-1) - 1}, \\
 & \text{ the Hamming weight of $\bm{x}$ is even (resp. odd) }
  \}.
\end{align*}
The outcome is $i \in \{0, 1 \}$ if the quantum state in $S( \mathbb{C}^{2 \otimes 4(l-1)})$
 is changed into a state in $S( \mathcal{V}_{i} )$.

For the case $l= 2^k$, we obtain a $[2^{k+2} - 4, k]$ single quantum deletion error-correcting code.

\subsection{Permutation of The Received Qubits}

Our code word has symmetry for position permutations.
In other words, any permutation for four qubits does not change the codeword.
Additionally, any received word after any single deletion has also symmetry for position permutation.

This means that even if we permute the three input to the quantum circuits
corresponding to Figure \ref{figure:decoder_3},
we can obtain the original quantum information $\ket{ \phi }$ at the left most position of output.

\section{There is no Quantum Deletion Codes with less than $4$ Qubits}

\begin{lemma}\label{thm:noLeng2}
There is no single deletion error-correcting code of length $2$.
\end{lemma}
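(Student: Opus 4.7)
The plan is a proof by contradiction: first I reduce to $k=1$, then I use the fact that two one-qubit density matrices can be deterministically distinguished from a single copy only when their supports are orthogonal, which in $\mathbb{C}^2$ forces both to be rank-one orthogonal pure states. This discrimination fact is the natural physical interpretation of the decoder recovering $\ket{\phi}$ with certainty, consistent with the explicit quantum-circuit decoders that the paper constructs for $Q_4$.

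If a $[2,k]$ single deletion code exists with $k\ge 1$, restricting $\mathrm{Enc}$ to any two-dimensional subspace of $\mathbb{C}^{2^k}$ is still a linear bijection onto its image and the original $\mathrm{Dec}$ continues to recover each restricted codeword, giving a $[2,1]$ code. So I may assume $k=1$; set $\ket{\Phi_0}:=\mathrm{Enc}(\ket{0})$ and $\ket{\Phi_1}:=\mathrm{Enc}(\ket{1})$, an orthonormal pair in $\mathbb{C}^{2\otimes 2}$. Since $\mathrm{Dec}$ must map $D_i(\ket{\Phi_0}\bra{\Phi_0})$ to $\ket{0}$ and $D_i(\ket{\Phi_1}\bra{\Phi_1})$ to $\ket{1}$ for each $i\in\{1,2\}$, the discrimination fact above forces both $D_1(\ket{\Phi_j}\bra{\Phi_j})$ and $D_2(\ket{\Phi_j}\bra{\Phi_j})$ to be pure states (and, for $j=0,1$, orthogonal to each other). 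Because a two-qubit pure state has a pure one-qubit reduced state if and only if it is a product state (Schmidt rank $1$), each codeword splits as $\ket{\Phi_j}=\ket{u_j}\otimes\ket{v_j}$, and the orthogonality of the reduced states on each subsystem gives $\langle u_0|u_1\rangle=\langle v_0|v_1\rangle=0$.

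The key step is to examine the encoded superpositions. For $\ket{\pm}:=(\ket{0}\pm\ket{1})/\sqrt{2}$ the encoded states are $\ket{\Phi_\pm}=(\ket{u_0 v_0}\pm\ket{u_1 v_1})/\sqrt{2}$, maximally entangled in the orthonormal bases $\{\ket{u_0},\ket{u_1}\}$ and $\{\ket{v_0},\ket{v_1}\}$. Partial tracing over the first qubit yields
\[
D_1(\ket{\Phi_\pm}\bra{\Phi_\pm}) \;=\; \tfrac{1}{2}\ket{v_0}\bra{v_0} + \tfrac{1}{2}\ket{v_1}\bra{v_1} \;=\; \tfrac{I}{2},
\]
because the $\pm$ cross terms $\pm\tfrac{1}{2}\bigl(\ket{u_0 v_0}\bra{u_1 v_1}+\ket{u_1 v_1}\bra{u_0 v_0}\bigr)$ vanish under $\mathrm{Tr}_1$ thanks to $\langle u_0|u_1\rangle=0$. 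Thus $D_1$ sends both $\ket{\Phi_+}$ and $\ket{\Phi_-}$ to the maximally mixed state, so $\mathrm{Dec}(D_1(\ket{\Phi_+}\bra{\Phi_+}))=\mathrm{Dec}(D_1(\ket{\Phi_-}\bra{\Phi_-}))$; the decoder cannot separate $\ket{+}$ from $\ket{-}$, contradicting the deletion-correction property.

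The hardest part is justifying the opening discrimination step, which is really the one-shot quantum state discrimination theorem and is clean under the natural physical reading of the decoder. If one instead admitted $\mathrm{Dec}$ as an arbitrary set-theoretic map on density matrices, the fallback is a coordinate computation: parametrize $\ket{\Phi_0}$ and $\ket{\Phi_1}$ by their four complex amplitudes, write out the four Kraus operators $K_{i,m}$ of the single-qubit deletions, and impose the Knill--Laflamme orthogonality $\langle K_{i,m}\Phi_0,K_{j,m'}\Phi_1\rangle = 0$; these conditions force $\{\ket{\Phi_0},\ket{\Phi_1}\}$ to be local-unitarily equivalent to $\{\ket{00},\ket{11}\}$, on which $D_i(\ket{\Phi}\bra{\Phi})$ is diagonal in the computational basis and so destroys the relative phase of $\ket{\phi}=\alpha\ket{0}+\beta\ket{1}$.
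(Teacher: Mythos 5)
Your proof is correct, but it takes a genuinely different route from the paper. The paper's argument is a one-paragraph appeal to the no-cloning theorem: if a length-$2$ code existed, then each of the two photons alone carries a recoverable copy of the logical state, so running the decoder on both photons simultaneously would produce two copies of an arbitrary state $\rho$. Your argument is instead structural: you use the physicality of the decoder (monotonicity of distinguishability under CPTP maps) to force $D_i(\ket{\Phi_j}\bra{\Phi_j})$ to be orthogonal pure states, deduce from Schmidt rank $1$ that both basis codewords are product states $\ket{u_j}\otimes\ket{v_j}$ with $\braket{u_0|u_1}=\braket{v_0|v_1}=0$, and then exhibit the explicit failure: $\ket{\Phi_+}$ and $\ket{\Phi_-}$ have identical reduced states $I/2$, so no decoder of any kind can separate them. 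What the paper's route buys is brevity and immediate generality (any situation with two disjoint individually-recoverable deletion patterns is killed at once by no-cloning); what your route buys is explicitness --- it identifies exactly which pair of logical states becomes indistinguishable, and the final contradiction is a failure of bare injectivity of $D_1\circ\mathrm{Enc}$, so it would survive even an arbitrary set-theoretic decoder once the product structure is in hand. One caveat: your opening discrimination step, like the paper's own proof, tacitly assumes $\mathrm{Dec}$ is a physically realizable operation rather than the bare ``map'' of the paper's definition; since the paper makes the same implicit assumption when it ``performs'' the decoder on both photons, this is not a defect relative to the paper's own standard, and you are right to flag it explicitly. (Two minor points: your appeal to Knill--Laflamme in the fallback is not quite the right tool for a set-theoretic decoder, where the only constraint is injectivity; and the orthonormality of $\mathrm{Enc}(\ket{0}),\mathrm{Enc}(\ket{1})$ is an isometry assumption slightly stronger than the paper's ``linear bijection,'' though consistent with the paper's concrete encoder.)
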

\begin{proof}
Let us assume that there exist a code of length $2$.
Then we have an encoder $\mathrm{En}_2 : \mathbb{C}^2 \rightarrow \mathbb{C}^{2 \otimes 2}$
and a decoder $\mathrm{De}_2 : S(\mathbb{C}^2) \rightarrow \mathbb{C}^2$.
Let $\rho \in \mathbb{C}^{2}$ and encode it to $\mathrm{En}_2 ( \rho )$.
Assume that the encoded state is corresponding to two photons $p_1$ and $p_2$.
The quantum states of these photons are $D_1 \circ \mathrm{En}_2 (\rho)$ and $D_2 \circ \mathrm{En}_2 ( \rho )$ respectively.
Perform the decoder $\mathrm{De}_2$ to the photons $p_1$ and $p_2$ simultaneously.
Then the states for $p_1$ and $p_2$ are changed to $\mathrm{De}_2 \circ D_1 \circ \mathrm{En}_2 (\rho) = \rho$ and $\mathrm{De}_2 \circ D_2 \circ \mathrm{En}_2 (\rho) = \rho$
respectively.
It contradicts to non-cloning theorem.
\end{proof}

The following is the remarkable result by Grassl et.al.
\begin{fact}[Theorem 5 \cite{grassl1997codes}]\label{fact:noLeng3}
There is no quantum error-correcting code of length
three that can correct one erasure and encodes one qubit.
\end{fact}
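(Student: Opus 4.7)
The plan is to suppose for contradiction that a length-three code encoding one qubit and correcting a single erasure exists, build from it a pure four-qubit state whose every one- and two-qubit reduction is maximally mixed, and then contradict the non-existence of such an \emph{absolutely maximally entangled} (AME) state on four qubits (Higuchi--Sudbery).

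First, introduce a one-qubit reference $R$ maximally entangled with the logical qubit and apply the hypothetical encoder to the logical half. Because the encoder is an isometry, the resulting state $\ket{\xi}$ on $R\otimes A\otimes B\otimes C$ is pure with $\rho_R=\tfrac{1}{2}\mathrm{I}$. The one-erasure-correcting property translates, via Knill--Laflamme, into the identity $\mathrm{Tr}_{\{A,B,C\}\setminus\{X\}}(\ket{i_L}\bra{j_L})=\delta_{ij}\sigma_X$ for each $X\in\{A,B,C\}$, each pair $i,j\in\{0,1\}$ of logical basis labels, and some fixed one-qubit density $\sigma_X$. Substituting this into the reduction of $\ket{\xi}\bra{\xi}$ over the other two code qubits collapses the off-diagonal terms on $R$ and yields the product $\rho_{RX}=\rho_R\otimes\sigma_X$ for every $X$.

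With these product structures the entropy arithmetic is routine. The product form gives $S(RX)=1+S(X)$; the purity of $\ket{\xi}$ gives $S(RX)=S(YZ)$ where $\{Y,Z\}=\{A,B,C\}\setminus\{X\}$; and subadditivity gives $S(YZ)\le S(Y)+S(Z)$. Summing the three resulting inequalities forces $S(A)+S(B)+S(C)\ge 3$; since each single-qubit entropy is at most one, equality holds and every one- and two-qubit marginal of $\ket{\xi}$ is maximally mixed. Thus $\ket{\xi}$ would be a four-qubit AME state, which by Higuchi--Sudbery does not exist. The main obstacle is twofold: importing that four-qubit AME non-existence theorem, and carrying out the Knill--Laflamme-to-product-structure passage above, where both the diagonal ($i=j$) and off-diagonal ($i\neq j$) parts of the erasure condition are needed to annihilate the cross terms on the reference register; the rest is standard quantum entropy calculus.
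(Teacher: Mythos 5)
The paper does not actually prove this statement: it is imported verbatim as a Fact, with the proof delegated entirely to the citation of Grassl, Beth and Pellizzari. So there is no in-paper argument to compare against, and your proposal stands or falls on its own --- and on its own terms it is correct. The Knill--Laflamme condition for correcting erasure of qubit $X$ is exactly $\mathrm{Tr}_{\{A,B,C\}\setminus\{X\}}(\ket{i_L}\bra{j_L})=\delta_{ij}\sigma_X$, because products of operators supported on $X$ span the full operator algebra of that qubit, and substituting this into the marginal of the purification does give $\rho_{RX}=\tfrac{1}{2}\mathrm{I}\otimes\sigma_X$. The entropy chain $1+S(X)=S(RX)=S(YZ)\le S(Y)+S(Z)$, summed over the three choices of $X$, forces $S(A)+S(B)+S(C)\ge 3$; since each term is at most $1$, equality holds throughout, every one- and two-qubit marginal of the four-qubit pure state is maximally mixed, and Higuchi--Sudbery supplies the contradiction. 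The one thing to be candid about is where the weight of the argument sits: a four-qubit AME state is in fact \emph{equivalent} to a length-three one-erasure code encoding one qubit (take any one party of the AME state as the reference and read the remaining three as the code space), so your argument is a clean reduction to an equivalent known theorem rather than an independent derivation from first principles. That is a perfectly legitimate proof provided the Higuchi--Sudbery non-existence result is admissible as an external ingredient, and it buys a conceptually transparent picture --- erasure correctability as decoupling of the lost qubit from the reference --- that the bare citation in the paper does not; but essentially all of the difficulty has been relocated into the imported theorem, so the argument should be presented as such.
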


A similar result to deletion error-correcting codes holds.
\begin{lemma}\label{thm:noLeng3}
There is no single deletion error-correcting code of length $3$.
\end{lemma}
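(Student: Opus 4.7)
The plan is to reduce the nonexistence of a length-$3$ deletion code to Fact \ref{fact:noLeng3}. The intuition, already emphasized in the introduction, is that deletion is harder than erasure because the position information is withheld; consequently, any deletion-correcting code should immediately give rise to an erasure-correcting code with the same parameters, and Grassl et al.\ have ruled the latter out.

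Concretely, I would argue by contradiction. Suppose a $[3,1]$ single deletion error-correcting code exists, with encoder $\mathrm{En}_3 : \mathbb{C}^2 \to \mathbb{C}^{2\otimes 3}$ and decoder $\mathrm{De}_3 : S(\mathbb{C}^{2\otimes 2}) \to \mathbb{C}^2$ satisfying $\mathrm{De}_3 \circ D_i \circ \mathrm{En}_3(\ket{\phi}) = \ket{\phi}$ for every $\ket{\phi} \in \mathbb{C}^2$ and every $1 \le i \le 3$. I want to build from this a quantum code of length $3$ encoding one qubit that corrects a single erasure.

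Given a received state affected by a single erasure at an advertised position $i$, the erasure recovery procedure proceeds in two steps: first apply the partial trace $\mathrm{Tr}_i$ to discard the erased qubit (this is legitimate because the location $i$ is known in the erasure model), and then apply $\mathrm{De}_3$ to the resulting state in $S(\mathbb{C}^{2\otimes 2})$. Since $\mathrm{Tr}_i = D_i$ by definition, composition with $\mathrm{En}_3$ yields $\mathrm{De}_3 \circ D_i \circ \mathrm{En}_3(\ket{\phi}) = \ket{\phi}$ for every $\ket{\phi}$ and every $i$, so the pair $(\mathrm{En}_3, \mathrm{De}_3 \circ \mathrm{Tr}_i)$ corrects any single erasure. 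Hence the image of $\mathrm{En}_3$ is a quantum code of length $3$ encoding one qubit that corrects one erasure, contradicting Fact \ref{fact:noLeng3}. Therefore no $[3,1]$ deletion code exists.

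The only subtle point, and the one I would take care to phrase cleanly, is the modelling step: one must make sure that the erasure channel being corrected is the one considered in \cite{grassl1997codes}, namely the one in which the location of the erased qubit is known to the decoder. That is precisely what makes the partial-trace step valid, and it is also what underlies the informal statement that ``deletion is harder than erasure.'' Once this is said explicitly, the reduction is essentially a one-line contradiction, so I do not expect a hard computational step — the substance of the lemma is entirely transferred to Fact \ref{fact:noLeng3}.
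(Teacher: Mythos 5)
Your proposal is correct and follows essentially the same route as the paper: both reduce the claim to Fact \ref{fact:noLeng3} by observing that an erasure at a known position can be converted into a deletion by tracing out that position, so a length-$3$ deletion code would yield a length-$3$ single-erasure code, a contradiction. Your version merely spells out the composition $\mathrm{De}_3 \circ \mathrm{Tr}_i \circ \mathrm{En}_3$ more explicitly than the paper does.
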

\begin{proof}
Assume that there exists a single deletion error-correcting code
of length $3$.
Let us explain that this code can be regarded as
a quantum erasure error-correcting code of length $3$.
If we have a received word with a single quantum erasure error,
we can find the position where the error occurs.
Then delete the state at the position of the received word.
In other words, we can convert an erasure error to 
a deletion error.
By the assumption,
we can correct the error by deletion error-correcting decoder.
This contradicts to Fact \ref{fact:noLeng3}.
\end{proof}

\begin{theorem}
The shortest length of single quantum deletion error-correcting codes is $4$.
\end{theorem}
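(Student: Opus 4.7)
The plan is to assemble the final theorem from results already established in the paper, with a short extra argument to rule out length $1$. Concretely, I would prove the two inequalities $\mathrm{length} \ge 4$ and $\mathrm{length} \le 4$ separately.

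For the upper bound, I would just cite the earlier theorem stating that $Q_4$ is a $[4,1]$ single deletion error-correcting code with encoder $\mathrm{En}_4$ and decoder $\mathrm{De}_4$. This immediately witnesses the existence of a code of length $4$, so the minimal length is at most $4$.

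For the lower bound, I would observe that Lemma \ref{thm:noLeng2} rules out length $2$ and Lemma \ref{thm:noLeng3} rules out length $3$. The only case not already handled in the excerpt is length $1$, which I would dispatch in one line: a length-$1$ code has an encoder $\mathrm{En}_1 : \mathbb{C}^2 \to \mathbb{C}^2$, and the only possible deletion operator $D_1 = \mathrm{Tr}_1$ sends every quantum state to the scalar $1 \in S(\mathbb{C}^{2 \otimes 0})$, destroying all information; hence no decoder can recover arbitrary $\ket{\phi}$, ruling out length $1$. Combining these three impossibility results, any single deletion error-correcting code must have length at least $4$.

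Putting the two bounds together gives that the shortest length is exactly $4$. No step in this assembly is hard: the nontrivial work has already been done in the construction of $Q_4$ and in the two lemmas. The only mildly delicate point is making explicit that length $1$ is excluded, but this follows directly from the definition of $D_i$ as a partial trace and the fact that $\mathrm{Tr}_1$ on a single qubit collapses all density matrices of $\mathbb{C}^2$ to the scalar $1$, so no injective decoding from that one-dimensional space back onto an arbitrary $\ket{\phi} \in \mathbb{C}^2$ can exist.
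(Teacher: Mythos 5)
Your proposal is correct and matches the paper's intent: the paper gives no explicit proof for this theorem, but it is clearly meant to follow by combining the $[4,1]$ code $Q_4$ with Lemmas \ref{thm:noLeng2} and \ref{thm:noLeng3}, exactly as you do. Your one-line exclusion of length $1$ (the partial trace $\mathrm{Tr}_1$ collapses every single-qubit state to the scalar $1$, so no decoder can recover $\ket{\phi}$) is a small case the paper silently omits, and including it makes the lower bound complete.
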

\section{Conclusion}
This paper provided a single quantum deletion error-correcting code
of optimal length, i.e. $4$.
The construction of this code is far from known quantum error-correcting codes,
e.g.,
CSS codes \cite{calderbank1996good,steane1996multiple},
stabilizer codes \cite{gottesman1997stabilizer},
surface codes \cite{fowler2012surface},
and etc.

\section*{Acknowledgment}
This paper is partially supported by
KAKENHI 18H01435.
The authors thank to Professor Mikio Nakahara,
Professor Akinori Kawachi, and
Professor Akihisa Tomita for valuable discussion.

\bibliography{reference}

\end{document}